\documentclass[aps,prx,reprint,amsmath,amssymb,longbibliography,floatfix]{revtex4-2}

\usepackage[T1]{fontenc}
\usepackage[utf8]{inputenc}
\usepackage{lmodern}
\usepackage{amsthm,mathtools,bm}
\usepackage{booktabs,graphicx,microtype,array}
\usepackage[hidelinks]{hyperref}
\usepackage[nameinlink,noabbrev]{cleveref}

\hypersetup{
  pdftitle={Readout-Rank Laws for Isotropic Quantum Tangents},
  pdfauthor={Marwan Ait Haddou}
}

\newtheorem{theorem}{Theorem}
\newtheorem{proposition}[theorem]{Proposition}
\newtheorem{corollary}[theorem]{Corollary}

\newcommand{\FQ}{F_Q}
\newcommand{\Ffull}{F_{\mathrm{full}}}
\newcommand{\Iacc}{\mathcal I}
\newcommand{\A}{\mathcal A}
\newcommand{\E}{\mathbb E}
\newcommand{\Var}{\operatorname{Var}}
\newcommand{\Prob}{\mathbb P}
\newcommand{\rank}{\operatorname{rank}}
\newcommand{\supp}{\operatorname{supp}}

\begin{document}

\title{Readout-Rank Laws for Isotropic Quantum Tangents}
\author{Marwan Ait Haddou}
\email[Contact author: ]{aithaddou.marwan@outlook.com}
\affiliation{Independent Researcher, Morocco}

\begin{abstract}
Deep parameterized quantum circuits may remain sensitive to a parameter change while the observables retained by a learning model barely respond.  We study this separation for a fixed computational-basis measurement.  For a pure-state tangent, we compare the quantum Fisher information $F_Q$, the Fisher information $F_{\rm full}$ in the complete bitstring distribution, and the largest variance-normalized response $\mathcal I_{\mathcal A}$ available to a diagonal readout space $\mathcal A$.  If the joint state--tangent frame is Haar random, we prove that the two successive information fractions are independent Beta variables whose means are $1/2$ and $r/(2^n-1)$, where $r$ is the centered dimension of the readout.  Consequently, even the joint span of all computational-basis Pauli strings through any fixed weight $k$ retain only $O(n^k2^{-n})$ of the full-record information.  Exact-statevector experiments across six circuit families show increasing finite-size agreement with this hierarchy in five nonconserving ensembles as the circuit depth grows.  A number-conserving family departs strongly from the isotropic prediction even after correcting the support and readout rank, showing that rank alone is insufficient without tangent isotropy.
\end{abstract}

\maketitle

\section{Introduction}
\label{sec:introduction}

Variational quantum algorithms couple a parameterized state
preparation to a classical optimizer through measured
expectation values or outcome statistics
\cite{peruzzo2014,mcclean2016}. The quantum state can
therefore respond strongly to a circuit parameter while the
quantities supplied to the optimizer respond only weakly. The
first statement concerns motion in state space; the second
concerns how much of that motion survives the chosen
measurement and the subsequent classical compression. Keeping
these two questions separate is the starting point of this work.

The distinction is particularly relevant in circuit-based
quantum classifiers, whose predictions are commonly constructed
from measured observables
\cite{havlicek2019,schuld2020,perezsalinas2020}. A circuit may
be sampled in the computational basis while the learning model
retains only a restricted set of quantities, such as the
single-qubit expectations $\langle Z_i\rangle$, a collection of
low-weight Pauli correlators, or a marginal distribution. These
features are computed from the complete bitstring record, but
they need not retain the same local statistical information.
The data-processing inequality orders the resulting
descriptions \cite{zamir1998}, but it does not determine the
size of the information loss.

More generally, global state distinguishability does not
guarantee that a restricted set of measured features exposes the
relevant signal. Low-weight observables may retain only a small
part of the information contained in the complete measurement
record. This creates a readout bottleneck that is logically
distinct from weak state sensitivity: the state and the full
outcome distribution may both respond appreciably even when the
features supplied to a learning model barely change.

Restrictions on accessible measurements already play a central
role in quantum estimation. Optimization over all measurements
leads to the quantum Fisher bound, whereas restrictions on the
allowed observables produce smaller information hierarchies
\cite{braunstein1994,hotta2004,lu2012}. Geometric approaches to
variational circuits likewise use the Fubini--Study or quantum
Fisher metric to characterize state sensitivity and optimization
geometry \cite{stokes2020,larocca2022}. Vanishing cost gradients
have separately been connected to circuit expressibility, cost
locality, noise, entanglement, and the dynamical Lie algebra
\cite{mcclean2018,cerezo2021,holmes2022,wang2021,
marrero2021,ragone2024}. Those results concern a specified cost
or an ensemble of costs. We instead ask a prior question: given
a nonzero state tangent and a fixed measurement record, how much
directional Fisher information can be reached by any observable
in an allowed readout space?

The answer is controlled by projection geometry. For an
$n$-qubit pure state, the physical tangent has two real sectors
of equal dimension. In the local phase frame, one sector changes
computational-basis probabilities, whereas the other changes
only relative phases to first order. A fixed projective
measurement retains the probability-changing sector and is
locally blind to the phase-changing sector. A restricted
diagonal readout then projects the visible probability tangent
once more onto the score subspace generated by the retained
features. When the joint state--tangent frame is isotropic,
these two losses become orthogonal projections whose exact
finite-size distributions depend only on the dimensions of the
retained subspaces.

The underlying finite-dimensional hyperspherical projection law
is classical: the squared norm of the projection of a uniformly
distributed sphere vector onto a fixed subspace follows a Beta
distribution determined by the retained and discarded
dimensions. This law has recently also been used to derive exact
subsystem-probability statistics for Haar-random pure states
\cite{wang2026geometric}. Our contribution is to identify the
same geometric law with two distinct information losses in a
parameterized quantum model: the loss from the physical tangent
to a fixed-basis probability tangent, and the subsequent loss
from the complete probability record to a rank-restricted
diagonal readout.

Randomizing the measurement basis provides a complementary route to
recovering quantum-geometric information. For pure-state models,
Haar-averaging the classical Fisher information matrix (CFIM) over
measurement bases yields one half of the quantum Fisher information
matrix (QFIM), and recent work gives dimension-dependent variance and
concentration bounds for this randomized-basis estimator
\cite{lu_sha_2025_random_cfim}. Related randomized-measurement schemes
use unitary-design rotations to estimate quantum-geometric information
with reduced state-preparation cost in variational imaginary-time
evolution \cite{kolotouros2025_random_measurements}. Our setting is
different: the computational basis is held fixed while the joint
state--tangent frame is isotropic, and a second projection then
compresses the resulting score into a centered rank-$r$ diagonal
readout. Thus the common factor $1/2$ has a related geometric origin,
but the exponential fixed-order suppression studied below is
specifically a fixed-record readout-compression effect.

This identification leads to an exact readout-rank law for Haar
state--tangent frames. We compare the directional quantum Fisher
information $F_Q$, the Fisher information $F_{\mathrm{full}}$
contained in the complete computational-basis distribution, and
the maximum variance-normalized response
$\mathcal I_{\mathcal A}$ accessible to a diagonal readout space
$\mathcal A$. We show that the two successive fractions
\[
\frac{F_{\mathrm{full}}}{F_Q}
\qquad\text{and}\qquad
\frac{\mathcal I_{\mathcal A}}{F_{\mathrm{full}}}
\]
are independent Beta variables with means $1/2$ and $r/N$,
respectively, where $N=2^n-1$ is the centered score-space
dimension and $r$ is the centered readout rank. Consequently,
\[
\mathbb E\!\left[
\frac{\mathcal I_{\mathcal A}}{F_Q}
\right]
=
\frac{r}{2N}.
\]
Thus, once the state--tangent frame is isotropic, the identity
and redundancy of individual observables matter only through
the dimension of the centered subspace that they jointly span.

This dimension count has a direct consequence for low-order
diagonal readouts. The joint space generated by all
computational-basis Pauli strings of weight at most $k$ has
centered rank
\[
r_k=\sum_{j=1}^{k}\binom{n}{j}.
\]
For every fixed $k$, this rank grows only polynomially in $n$,
whereas the complete score space has dimension $2^n-1$.
Accordingly, the expected retained fraction satisfies
\[
\mathbb E\!\left[
\frac{\mathcal I_{\leq k}}{F_{\mathrm{full}}}
\right]
=
\Theta\!\left(n^k2^{-n}\right).
\]
The suppression therefore does not arise because only one local
observable is retained: it persists even when every diagonal
Pauli string through a fixed weight is used jointly.

The exact laws require joint state--tangent isotropy and do not
follow from randomness of the state alone. We therefore do not
assume that finite-depth parameterized circuits are Haar random.
Instead, exact-statevector experiments test the predicted
finite-size hierarchy across six circuit families. Five
nonconserving families show increasing agreement with the
rank-controlled prediction as the circuit depth grows. A
number-conserving family remains far from the isotropic law even
after the measurement support and the actual readout rank are
corrected. This boundary case shows that rank alone is
insufficient outside the isotropic regime: the orientation of
the probability tangent relative to the readout subspace remains
essential.

The paper is organized as follows. Section~2 defines the state,
measurement, and readout information quantities and derives the
score-projection identity. Section~3 establishes the exact Haar
state--tangent laws and states the conditions under which a
circuit produces the required joint frame. Section~4 presents
the Haar controls, the finite-depth circuit crossover, the
aggregate readout-rank test, and the number-conserving boundary
case. Section~5 discusses implications for
quantum-machine-learning gradients and the limitations of the
analysis. Section~6 concludes. Supporting derivations and
numerical checks are given in the appendices.
\section{Projection Geometry of Accessible Readout Information}

\label{sec:setup}

Let $|\psi_{\bm\theta}\rangle\in\mathbb C^D$ be a differentiable pure state, with $D=2^n$.  Fix a unit parameter-space direction $v$, use $\partial_v$ for the derivative along that direction, and evaluate all quantities at the reference parameter $\bm\theta$ unless stated otherwise.  The raw directional tangent is
\begin{equation}
 |\dot\psi_v\rangle
 =\left.\frac{\partial}{\partial\alpha}
 |\psi_{\bm\theta+\alpha v}\rangle\right|_{\alpha=0}.
 \label{eq:raw_tangent}
\end{equation}
The physically relevant tangent is its horizontal component in projective Hilbert space~\cite{provost1980}
\begin{equation}
 |\phi\rangle=(\mathbb I-|\psi\rangle\!\langle\psi|)|\dot\psi_v\rangle,
 \qquad \langle\psi|\phi\rangle=0.
 \label{eq:horizontal}
\end{equation}
For a pure state, the directional quantum Fisher information is~\cite{braunstein1994}
\begin{equation}
 \FQ(v)=4\langle\phi|\phi\rangle.
 \label{eq:qfi}
\end{equation}
The projection in \cref{eq:horizontal} removes a change of global phase, which has no observable effect.  The remaining vector $|\phi\rangle$ is the instantaneous motion of the physical state in projective Hilbert space, and $F_Q(v)$ is four times its squared Fubini--Study speed~\cite{provost1980,stokes2020}.  It therefore provides the measurement-independent reference against which we compare a fixed readout.

Let $z\in\{0,1\}^n$ denote a computational-basis outcome.  We write the corresponding state amplitude as $\psi_z=\sqrt{p_z}e^{i\vartheta_z}$, where $p_z$ is the measurement probability, and resolve the tangent in the local phase frame as
\begin{equation}
 e^{-i\vartheta_z}\phi_z=x_z+i y_z,
 \qquad x,y\in\mathbb R^D.
 \label{eq:xy}
\end{equation}
Orthogonality gives $x\perp\sqrt p$ and $y\perp\sqrt p$.  Hence both live in real subspaces of dimension
\begin{equation}
 N\equiv D-1.
 \label{eq:N}
\end{equation}
The probability tangent under the computational-basis measurement is
\begin{equation}
 q_z\equiv\partial_v p_z
 =2\operatorname{Re}(\psi_z^*\phi_z)
 =2\sqrt{p_z}\,x_z.
 \label{eq:q}
\end{equation}
Assuming $p_z>0$ for every outcome, the classical Fisher information in the complete bitstring distribution is~\cite{fisher1922}
\begin{equation}
 \Ffull(v)=\sum_z\frac{q_z^2}{p_z}=4\|x\|^2,
 \qquad
 \FQ(v)=4(\|x\|^2+\|y\|^2).
 \label{eq:amplitude_phase}
\end{equation}
Thus a fixed basis sees the real vector $x$, which changes outcome probabilities, but not the vector $y$, which changes only relative phases to first order.  The difference $\FQ-\Ffull=4\|y\|^2$ is a loss at the measurement stage.  A second and conceptually different loss occurs when the complete probability record is compressed to a restricted family of observables.  Figure~\ref{fig:geometry} separates these two projections.

\begin{figure*}[t]
 \centering
 \includegraphics[width=0.98\textwidth]{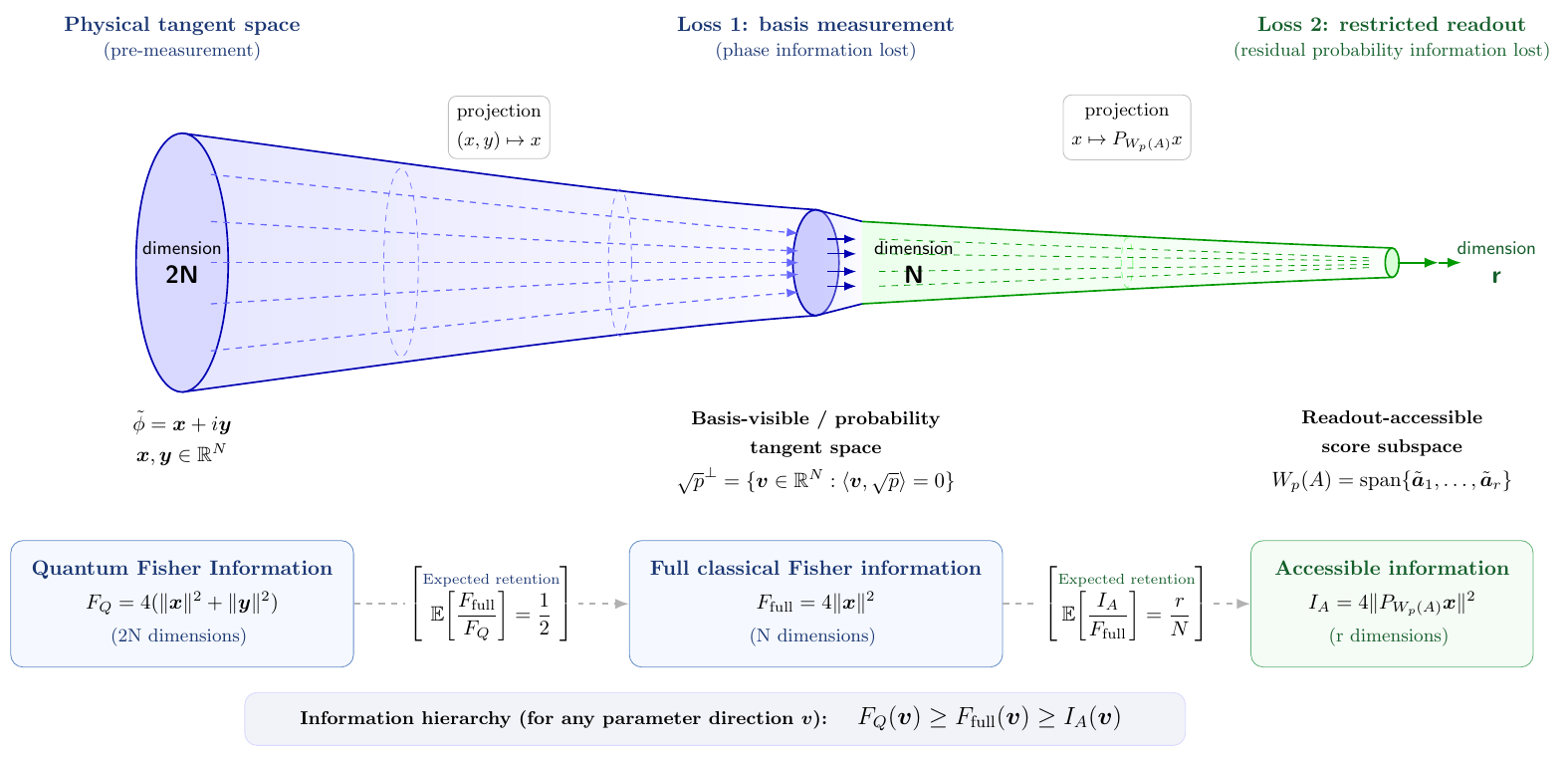}

 \caption{\textbf{Two-stage geometric information loss in QML.} Quantum information in the physical tangent space undergoes two successive orthogonal projections: (i) basis measurement discards phase directions; (ii) restricted readout discards probability-score directions orthogonal to the chosen observables.}
 \label{fig:geometry}
\end{figure*}

To make the effect of classical compression concrete, consider first a marginal record.  For a subset $S$ of the $n$ qubits, $p_S(z_S)$ denotes the probability of the restricted bit string $z_S$, and
\begin{equation}
 p_S(z_S)=\sum_{z_{\bar S}}p(z_S,z_{\bar S}),\qquad
 F_S(v)=\sum_{z_S}\frac{(\partial_vp_S(z_S))^2}{p_S(z_S)}.
 \label{eq:marginal_fi}
\end{equation}
The best marginal of order at most $k$ is
\begin{equation}
 F^{(k)}(v)=\max_{|S|\le k}F_S(v).
 \label{eq:best_marginal}
\end{equation}

\begin{proposition}[Marginal data processing]
If $S\subseteq T$, then $F_S(v)\le F_T(v)$.  Consequently,
\begin{equation}
 F^{(1)}\le F^{(2)}\le\cdots\le F^{(n)}=\Ffull\le\FQ.
 \label{eq:hierarchy}
\end{equation}
\end{proposition}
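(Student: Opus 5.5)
The plan has three parts: prove the monotonicity $F_S\le F_T$ by a Cauchy--Schwarz argument over each marginal fibre, obtain the chain up to $F^{(n)}$ from it, and get $\Ffull\le\FQ$ from \cref{eq:amplitude_phase}.

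\textbf{Marginal monotonicity.} Write $T=S\cup R$ with $S\cap R=\emptyset$, so that $p_S(z_S)=\sum_{z_R}p_T(z_S,z_R)$. Differentiation commutes with this finite sum, hence
\[
\partial_v p_S(z_S)=\sum_{z_R}\partial_v p_T(z_S,z_R).
\]
Fix $z_S$ and apply Cauchy--Schwarz to $\sum_{z_R}\bigl(\partial_v p_T/\sqrt{p_T}\bigr)\sqrt{p_T}$. This gives
\[
\bigl(\partial_v p_S(z_S)\bigr)^2\le \Bigl(\sum_{z_R}\frac{(\partial_v p_T)^2}{p_T}\Bigr)\, p_S(z_S).
\]
Dividing by $p_S(z_S)>0$ and summing over $z_S$ yields $F_S\le F_T$. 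Positivity is inherited from the standing assumption $p_z>0$: every $p_T$ and $p_S$ is a sum of strictly positive full probabilities. This is the usual data-processing inequality for a coarse-graining, carried out explicitly.

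\textbf{The chain up to $F^{(n)}$.} Take $k<k'$. For any $S$ with $|S|\le k$ we also have $|S|\le k'$, so the maximum defining $F^{(k')}$ ranges over a larger family and $F^{(k)}\le F^{(k')}$. In fact this step needs only set inclusion, not the monotonicity just proved. To identify the top level, note that the only set of size $n$ is the full register. Every $S$ is contained in it, so the monotonicity gives $F_S\le F_{\{1,\dots,n\}}$. Hence $F^{(n)}=F_{\{1,\dots,n\}}$, and $F_{\{1,\dots,n\}}=\Ffull$ because the marginal on all qubits is $p$ itself.

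\textbf{The final inequality.} By \cref{eq:amplitude_phase}, $\FQ-\Ffull=4\|y\|^2\ge 0$. This step uses that the pure-state QFI here is $4\langle\phi|\phi\rangle$ and that the decomposition $\|\phi\|^2=\|x\|^2+\|y\|^2$ holds. The latter follows because the phase rotation $e^{-i\vartheta_z}$ in \cref{eq:xy} is unitary componentwise.

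No step is a real obstacle. The only care needed is the positivity of the marginal probabilities, so that the divisions are legitimate, and the observation that $F^{(n)}$ coincides with $F_{\{1,\dots,n\}}$ rather than being a maximum strictly above $\Ffull$.
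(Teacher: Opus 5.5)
Your proof is correct and is essentially the paper's argument. The fibrewise Cauchy--Schwarz bound is exactly the conditional Jensen step $\E[s_S^2]\le\E[s_T^2]$ for the score $s_S=\E[s_T\mid Z_S]$, written out explicitly. The only deviation is the last inequality: you read $\FQ-\Ffull=4\|y\|^2\ge0$ off \cref{eq:amplitude_phase} instead of citing the general Braunstein--Caves bound, which is valid in this full-support, fixed-basis pure-state setting and even identifies the gap, but unlike the cited bound does not extend to arbitrary measurements.
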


\begin{proof}
Let $s_T=\partial_v\log p_T$ be the score.  Marginalization gives
$s_S=\E[s_T|Z_S]$.  Conditional Jensen implies
$\E[s_S^2]\le\E[s_T^2]$, proving the classical inequalities.  The last inequality is the Braunstein--Caves measurement bound~\cite{braunstein1994}.
\end{proof}

The hierarchy orders the resolution of the record but does not predict the size of the gaps between successive marginals.  We will refer to the smallest $k$ that recovers a prescribed fraction of $F_{\rm full}$ as the \emph{marginal recovery order}; a large value of $k$ does not, by itself, identify an irreducible $k$-body correlation.

A marginal is only one possible compression of the bitstring record.  To treat arbitrary diagonal readouts at once, let $\A$ be a finite-dimensional real vector space of functions $c:\{0,1\}^n\to\mathbb R$ that contains the constant function.  Each $c$ is the eigenvalue function of a diagonal observable.  We define the directional information accessible to this whole readout space by
\begin{equation}
 \Iacc_{\A}(v)=
 \sup_{c\in\A:\,\Var_p(c)>0}
 \frac{\big(\partial_v\E_p[c]\big)^2}{\Var_p(c)}.
 \label{eq:I_A}
\end{equation}
The variance in the denominator removes an arbitrary rescaling of the observable.  In the many-shot local limit, the same covariance-normalized quadratic form appears as the information carried by empirical estimating functions~\cite{godambe1960}.

Associate to every readout the centered weighted vector
\begin{equation}
 w_c=\sqrt p\,(c-\E_p c)\in \sqrt p^{\perp},
 \label{eq:weighted_feature}
\end{equation}
and let
\begin{equation}
 W_p(\A)=\{w_c:c\in\A\},\qquad r=\dim W_p(\A).
 \label{eq:centered_rank}
\end{equation}
We call $r$ the \emph{centered readout rank}.  It counts the independent, nonconstant score directions that the readout can represent; with full support and no additional redundancy, $r=\dim\A-1$.

\begin{proposition}[Score projection]
For the computational-basis tangent in \cref{eq:q},
\begin{equation}
 \Iacc_{\A}(v)=4\|P_{W_p(\A)}x\|^2,
 \label{eq:projection}
\end{equation}
where $P_{W_p(\A)}$ is the Euclidean orthogonal projector.
\end{proposition}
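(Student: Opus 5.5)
The plan is to rewrite both the numerator and the denominator of \cref{eq:I_A} as Euclidean inner products in $\mathbb R^D$, using the weighted vector $w_c$ of \cref{eq:weighted_feature}. The supremum then becomes a Cauchy--Schwarz maximization over the subspace $W_p(\A)$.

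For the denominator, with full support assumed as in \cref{eq:amplitude_phase}, we have
\[
\Var_p(c)=\sum_z p_z\,(c_z-\E_p c)^2=\|w_c\|^2 .
\]
Hence $\Var_p(c)>0$ holds exactly when $w_c\neq 0$.

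For the numerator, we differentiate $\E_p[c]=\sum_z p_z c_z$. The readout $c$ is fixed and does not depend on $\bm\theta$, so $\partial_v\E_p[c]=\sum_z q_z c_z$. Since $\sum_z q_z=\partial_v\sum_z p_z=0$, we may replace $c_z$ by $c_z-\E_p c$. Using \cref{eq:q},
\[
\partial_v\E_p[c]=\sum_z 2\sqrt{p_z}\,x_z\,(c_z-\E_p c)=2\langle x,w_c\rangle .
\]
The centering step here deserves care: it is the only place where $\sum_z q_z=0$ enters. Equivalently, one may note that $x\perp\sqrt p$ and that $\sqrt p\,c$ differs from $w_c$ only by a multiple of $\sqrt p$.

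Combining the two expressions gives
\[
\Iacc_{\A}(v)=\sup_{w\in W_p(\A)\setminus\{0\}}\frac{4\langle x,w\rangle^2}{\|w\|^2}.
\]
The map $c\mapsto w_c$ is linear, and its image is $W_p(\A)$ by definition, so the supremum ranges over all nonzero vectors of that subspace. For $w\in W_p(\A)$ we have $\langle x,w\rangle=\langle P_{W_p(\A)}x,w\rangle$. Cauchy--Schwarz then gives
\[
\frac{\langle x,w\rangle^2}{\|w\|^2}\le\|P_{W_p(\A)}x\|^2 .
\]
If $P_{W_p(\A)}x\neq0$, equality is attained at $w=P_{W_p(\A)}x$, so the supremum equals the bound.

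If $P_{W_p(\A)}x=0$, every ratio vanishes and the formula gives $0$. This is consistent as long as $r>0$. If $r=0$, the supremum is over an empty set, and we adopt the convention that it equals $0$.

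I expect no real obstacle. The only points needing care are the centering identity in the numerator and the handling of degenerate cases: observables with zero variance and the empty supremum.
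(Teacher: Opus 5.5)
Your proposal is correct and follows essentially the paper's proof: use $\sum_z q_z=0$ to write $\partial_v\E_p[c]=2\langle w_c,x\rangle$, note $\Var_p(c)=\|w_c\|^2$, and maximize the normalized squared inner product over $W_p(\A)$. Your explicit Cauchy--Schwarz step, attainment at $w=P_{W_p(\A)}x$, and handling of the degenerate cases fill in details the paper leaves implicit.
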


\begin{proof}
Since $\sum_zq_z=0$,
\begin{equation*}
 \partial_v\E_p[c]=\sum_zc_zq_z
 =2\langle w_c,x\rangle,
 \qquad \Var_p(c)=\|w_c\|^2.
\end{equation*}
Taking the largest squared normalized inner product over $w_c\in W_p(\A)$ yields Eq.~\eqref{eq:projection}.
\end{proof}

The proposition turns readout accessibility into an ordinary projection problem.  The centered observables span a subspace of the full score space, and the best normalized response is the squared length of the probability tangent projected onto that subspace.  A readout can therefore be insensitive even when the state and the complete measurement distribution both move appreciably.

For $\A_S$, the space of all functions of a subset $S$, the score-projection identity is exactly
\begin{equation}
 \Iacc_{\A_S}(v)=F_S(v).
 \label{eq:marginal_projection}
\end{equation}
Thus marginal Fisher information is not a separate construction: it is the special case in which every diagonal statistic on $S$ is allowed.

For later use, every individual observable $c\in\A$ obeys
\begin{equation}
 |\partial_v\E_p[c]|^2\le \Var_p(c)\Iacc_{\A}(v),
 \label{eq:gradient_bound}
\end{equation}
with equality when the centered $w_c$ is proportional to $P_{W_p(\A)}x$.

The same tangent span controls a differentiable classical head.  Let
$m_a(\bm\theta)=\E_p[c_a]$ for a basis $\{c_a\}$ of $\A$, and let
$L(\bm\theta)=f(m_1,\ldots,m_r)$ be any differentiable scalar head.  The chain rule gives
\begin{align}
 \partial_v L
 &=\sum_a\frac{\partial f}{\partial m_a}\partial_v m_a
 =\partial_v\E_p[c_{\rm eff}],\nonumber\\
 c_{\rm eff}&=\sum_a\frac{\partial f}{\partial m_a}c_a\in\A.
 \label{eq:nonlinear_head}
\end{align}

A nonlinear classical head may choose a data-dependent linear combination of the measured features at each operating point, but it cannot create a new first-order quantum direction outside their span.  This observation will connect the readout rank to QML gradients in Section~\ref{sec:qml}.

\section{Haar tangent frames and finite-depth circuits}
\label{sec:haar}

Draw $|\psi\rangle$ from the Haar measure and, conditional on that state, draw a unit vector $|u\rangle$ uniformly from the complex subspace orthogonal to $|\psi\rangle$.  The pair $(\psi,u)$ is then a Haar-random orthonormal complex two-frame, equivalently the first two columns of a Haar unitary~\cite{mezzadri2007}.  Setting $|\phi\rangle=\sigma|u\rangle$ for an arbitrary tangent norm $\sigma>0$, the $2N$ real coordinates $(x,y)$ in \cref{eq:xy} are, conditional on $\psi$, uniformly distributed on the sphere of radius $\sigma$ in $\mathbb R^{2N}$.
We write $\operatorname{Beta}(a,b)$ for the Beta distribution with shape parameters $a$ and $b$.

The probability law used below is the standard finite-dimensional
hyperspherical projection law: if a vector is uniformly distributed
on a real sphere in dimension $n$, then the squared norm of its
projection onto a fixed $m$-dimensional subspace follows
$\mathrm{Beta}(m/2,(n-m)/2)$. This law has recently also been used
to derive exact subsystem-probability statistics of Haar-random
pure states from the projected central limit theorem
\cite{wang2026geometric}. Our use of it concerns a different object:
the joint state--tangent frame and its successive projections onto
the probability-tangent and readout-score subspaces.

\begin{theorem}[Full fixed-basis law]
For a Haar state--tangent frame,
\begin{equation}
 \frac{\Ffull}{\FQ}\sim
 \operatorname{Beta}\!\left(\frac N2,\frac N2\right),
 \qquad
 \E\frac{\Ffull}{\FQ}=\frac12.
 \label{eq:full_beta}
\end{equation}
\end{theorem}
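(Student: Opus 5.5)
By \cref{eq:amplitude_phase}, the ratio in the theorem is
\[
\frac{\Ffull}{\FQ}=\frac{\|x\|^2}{\|x\|^2+\|y\|^2}.
\]
The tangent norm $\sigma$ cancels, so it is enough to understand the joint law of $(x,y)$. The plan is to condition on the state $\psi$, identify $(x,y)$ as a uniform point on a real sphere of dimension $2N-1$, and then apply the hyperspherical projection law stated just before the theorem.

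\textbf{Step 1: the real coordinate map is an isometry.} Fix $\psi$ and assume full support $p_z>0$, which holds Haar-almost surely. Consider the map $\phi\mapsto(x,y)$ with $x_z+iy_z=e^{-i\vartheta_z}\phi_z$. It is a real-linear isometry from $\mathbb C^D\cong\mathbb R^{2D}$ onto $\mathbb R^{2D}$, because multiplying each coordinate by the phase $e^{-i\vartheta_z}$ is unitary.

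\textbf{Step 2: the image of $\psi^{\perp}$.} Write $\langle\psi|\phi\rangle=\sum_z\sqrt{p_z}(x_z+iy_z)$. Its real and imaginary parts vanish exactly when $x\perp\sqrt p$ and $y\perp\sqrt p$. Hence the isometry carries the complex subspace $\psi^\perp$, of real dimension $2N$, onto $\sqrt p^{\perp}\oplus\sqrt p^{\perp}\cong\mathbb R^N\oplus\mathbb R^N$.

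\textbf{Step 3: uniformity of $(x,y)$.} The uniform measure on the unit sphere of $\psi^\perp$ is invariant under the unitary group of $\psi^\perp$, which contains the orthogonal group of the underlying real space. The isometry therefore pushes it forward to the uniform measure on the sphere of radius $\sigma$ in $\mathbb R^{2N}$. This is the statement already recorded in the setup; the argument above only makes explicit that the split into $x$ and $y$ coordinates is an orthogonal decomposition into two fixed $N$-dimensional subspaces.

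\textbf{Step 4: apply the projection law.} Conditional on $\psi$, project onto the fixed $N$-dimensional $x$-block of the $2N$-dimensional sphere. The hyperspherical projection law gives
\[
\frac{\|x\|^2}{\sigma^2}\sim\operatorname{Beta}\!\left(\frac N2,\frac{2N-N}{2}\right)=\operatorname{Beta}\!\left(\frac N2,\frac N2\right).
\]
For a self-contained check, write a uniform vector as $g/\|g\|$ with $g$ standard Gaussian in $\mathbb R^{2N}$. Then $\|g_x\|^2$ and $\|g_y\|^2$ are independent $\chi^2_N$ variables, and their ratio to the sum is $\operatorname{Beta}(N/2,N/2)$.

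\textbf{Step 5: remove the conditioning.} The conditional law does not depend on $\psi$, so it is also the unconditional law. The mean of $\operatorname{Beta}(a,b)$ is $a/(a+b)$, which here equals $1/2$.

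\textbf{Main obstacle.} The main care point is Step 3. One must verify that the $\psi$-dependent phase frame $e^{-i\vartheta_z}$ does not spoil uniformity. This holds because, for each fixed $\psi$, the frame is a fixed isometry. One must also confirm that the measure-zero event of non-full support, where $\vartheta_z$ is undefined and \cref{eq:amplitude_phase} fails, can be ignored.
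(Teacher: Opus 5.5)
Your argument is essentially the paper's own: you condition on $\psi$, phase-align, identify $(x,y)$ as a uniform point of the radius-$\sigma$ sphere in $\sqrt p^{\perp}\oplus\sqrt p^{\perp}\cong\mathbb R^{2N}$ (as in Appendix A), and read off $\operatorname{Beta}(N/2,N/2)$ from two independent $\chi^2_N$ blocks, which is exactly how the paper proceeds. The one slip is in Step 3, where the inclusion is reversed: $U(N)\subset O(2N)$, not the converse, so the correct justification is that $U(N)$ acts transitively on the unit sphere of $\psi^\perp$, hence its invariant probability measure is unique and equals the normalized surface measure.
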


\begin{proof}
Represent the squared coordinates of a uniform sphere vector by independent chi-square variables.  Then $\|x\|^2\sim\chi_N^2$ and $\|y\|^2\sim\chi_N^2$ up to a common normalization.  Their ratio in \cref{eq:amplitude_phase} is Beta distributed.
\end{proof}

An isotropic tangent has no preferred orientation between the amplitude and phase sectors.  The fixed basis therefore retains one half of the QFI on average.  This loss remains of order one as the number of qubits grows, so it cannot explain the exponential suppression produced by a low-rank readout.

\begin{theorem}[Readout-rank law]
Let $\A$ have centered rank $r$ on the support of $p$.  For a Haar state--tangent frame,
\begin{align}
 \frac{\Iacc_{\A}}{\Ffull}
 &\sim \operatorname{Beta}\!\left(\frac r2,\frac{N-r}{2}\right),
 \label{eq:rank_beta_full}\\
 \frac{\Iacc_{\A}}{\FQ}
 &\sim \operatorname{Beta}\!\left(\frac r2,\frac{2N-r}{2}\right).
 \label{eq:rank_beta_q}
\end{align}
Moreover, $\Iacc_{\A}/\Ffull$ and $\Ffull/\FQ$ are independent, and
\begin{equation}
 \E\frac{\Iacc_{\A}}{\Ffull}=\frac rN,
 \qquad
 \E\frac{\Iacc_{\A}}{\FQ}=\frac r{2N}.
 \label{eq:rank_means}
\end{equation}
\end{theorem}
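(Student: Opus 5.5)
The plan is to work conditionally on the state $\psi$, use a Gaussian representation of the sphere vector, and then read off every claimed law as a ratio of independent chi-square variables.

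\textbf{Setup.} Fix $\psi$. With full support, $\sqrt p^\perp$ is a fixed real $N$-dimensional space. The readout subspace $W_p(\A)\subseteq\sqrt p^\perp$ is determined by $p$ alone, and its dimension $r$ is fixed by hypothesis.

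\textbf{Gaussian representation.} As in the proof of the full fixed-basis law, write $(x,y)=\sigma\,(g,h)/\|(g,h)\|$. Here $g,h$ are independent standard Gaussian vectors on $\sqrt p^{\perp}$, i.e.\ each is $\mathcal N(0,I_N)$ in an orthonormal basis of that space. This is valid because, conditional on $\psi$, $(x,y)$ is uniform on the radius-$\sigma$ sphere in $\mathbb R^{2N}$. The common scale $\sigma/\|(g,h)\|$ cancels in every ratio.

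\textbf{Chi-square decomposition.} By the score-projection proposition (Eq.~\eqref{eq:projection}) and Eq.~\eqref{eq:amplitude_phase},
\[
\frac{\Iacc_\A}{\Ffull}=\frac{\|P_Wg\|^2}{\|g\|^2},\qquad
\frac{\Ffull}{\FQ}=\frac{\|g\|^2}{\|g\|^2+\|h\|^2}.
\]
Rotation invariance of the Gaussian lets us choose an orthonormal basis of $\sqrt p^\perp$ adapted to $W=W_p(\A)$. Then $\|g\|^2=A+B$, where
\[
A=\|P_Wg\|^2\sim\chi^2_r,\qquad B=\|P_{W^\perp}g\|^2\sim\chi^2_{N-r},\qquad C=\|h\|^2\sim\chi^2_N,
\]
and $A,B,C$ are mutually independent.

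\textbf{The two marginal laws.}
\begin{itemize}
\item $A/(A+B)\sim\mathrm{Beta}(r/2,(N-r)/2)$, which is Eq.~\eqref{eq:rank_beta_full}.
\item $A/(A+B+C)\sim\mathrm{Beta}(r/2,(2N-r)/2)$, because $B+C\sim\chi^2_{2N-r}$ is independent of $A$. This is Eq.~\eqref{eq:rank_beta_q}.
\end{itemize}

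\textbf{Independence.} The key step, and the only place requiring care, is the independence of $\Iacc_\A/\Ffull$ and $\Ffull/\FQ$. I would invoke the standard Lukacs/Gamma-Beta fact: for independent Gamma variables $U,V$, the ratio $U/(U+V)$ is independent of the sum $U+V$.
\begin{itemize}
\item Applied to $(A,B)$: $A/(A+B)$ is independent of $A+B$, and both are independent of $C$.
\item Hence $A/(A+B)$ is independent of the pair $(A+B,\,C)$.
\item $\Ffull/\FQ=(A+B)/(A+B+C)$ is a function of that pair, so the two ratios are independent conditional on $\psi$.
\end{itemize}

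\textbf{Unconditioning and means.} The conditional laws above do not depend on $\psi$, provided the rank $r$ is fixed, which is the theorem's hypothesis. Therefore the same laws and the same independence hold unconditionally.

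The means follow from $\E\,\mathrm{Beta}(a,b)=a/(a+b)$:
\[
\E\frac{\Iacc_\A}{\Ffull}=\frac{r/2}{N/2}=\frac rN,\qquad
\E\frac{\Iacc_\A}{\FQ}=\frac{r/2}{N}=\frac r{2N}.
\]
The second value also equals $\tfrac rN\cdot\tfrac12$, the product of the two means, which is consistent with independence.

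\textbf{Support caveat.} For Haar states, $p_z>0$ almost surely, so the full-support assumption behind Eq.~\eqref{eq:amplitude_phase} holds with probability one.
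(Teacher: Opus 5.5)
Your proposal is correct and follows essentially the same route as the paper: condition on $\psi$, represent the isotropic tangent by independent chi-square blocks $A,B,C$ with $r$, $N-r$, $N$ degrees of freedom, read off the Beta laws, and obtain independence from the beta--gamma factorization, using that $C$ is independent of $(A,B)$ and hence of the pair $(A/(A+B),\,A+B)$. This is exactly the paper's proof and its appendix derivation; your explicit unconditioning step and the remark that Haar states have full support almost surely are slightly more careful than the paper's text but do not change the argument.
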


\begin{proof}
Conditional on $\psi$, the subspace $W_p(\A)$ is fixed and has real dimension $r$.  Rotational invariance lets us decompose the tangent norm into three independent chi-square components:
\begin{equation*}
 A=\|P_Wx\|^2,\quad B=\|(I-P_W)x\|^2,\quad C=\|y\|^2,
\end{equation*}
with degrees of freedom $r$, $N-r$, and $N$.  Equations~\eqref{eq:projection} and \eqref{eq:amplitude_phase} give
\begin{align*}
 \frac{\Iacc_{\A}}{\Ffull}&=\frac A{A+B}, &
 \frac{\Ffull}{\FQ}&=\frac{A+B}{A+B+C},\\
 \frac{\Iacc_{\A}}{\FQ}&=\frac A{A+B+C}.&&
\end{align*}
Applying the finite-dimensional hyperspherical projection law to
the three orthogonal blocks of dimensions $r$, $N-r$, and $N$
yields the stated Beta distributions. The beta--gamma algebra also
shows that $A/(A+B)$ is independent of $A+B$, and therefore of
$(A+B)/(A+B+C)$.
\end{proof}

The theorem is a finite-size dimension-counting law, including its fluctuations.  Once the tangent orientation is isotropic, the identity of an observable matters only through the centered subspace spanned by the whole readout.  Redundant or overlapping features do not help beyond the rank that they add.

Two choices of readout make this statement useful for QML.  First, if $S$ is a fixed subset of $k$ qubits and the readout retains every diagonal function of its marginal outcome, then its centered rank is $2^k-1$.

\begin{corollary}[Fixed marginal]
For a fixed subset $S$ of $k$ qubits, $r=2^k-1$ and
\begin{equation}
 \frac{F_S}{\Ffull}\sim
 \operatorname{Beta}\!\left(\frac{2^k-1}{2},
 \frac{N-(2^k-1)}{2}\right).
 \label{eq:marginal_beta}
\end{equation}
In particular, $\E[F_S/\Ffull]=(2^k-1)/(2^n-1)$.
\end{corollary}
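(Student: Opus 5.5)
The corollary is the readout-rank law specialized to $\A=\A_S$. The only real work is to check that the centered rank of $\A_S$ on the support of $p$ equals $2^k-1$. The plan is therefore: (i) identify $\Iacc_{\A_S}$ with $F_S$ via \cref{eq:marginal_projection}; (ii) compute $r=\dim W_p(\A_S)$; (iii) substitute into \cref{eq:rank_beta_full} and read off the mean $r/N$ with $N=2^n-1$.

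For step (ii), take the basis $\{\mathbf 1_{z_S=a}\}_{a\in\{0,1\}^k}$ of $\A_S$, which has dimension $2^k$ and contains the constants. The map $c\mapsto w_c=\sqrt p\,(c-\E_pc)$ is linear, and its kernel consists of the $c\in\A_S$ that are $p$-a.s.\ constant. Under the standing full-support assumption $p_z>0$ for all $z$ (used already for \cref{eq:amplitude_phase}), $p$-a.s.\ constancy means genuine constancy. Since the indicators are linearly independent as functions on $\{0,1\}^n$, the kernel is exactly the constants, so $r=2^k-1$.

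For a Haar frame, the state $\psi$ has full support with probability one: each $p_z>0$ almost surely, because $\{\psi_z=0\}$ is a proper real-algebraic subset of the sphere. So the rank computation holds almost surely. That is enough, since the readout-rank law is proved conditionally on $\psi$.

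Step (iii) is immediate from the readout-rank law together with $\E\,\mathrm{Beta}(a,b)=a/(a+b)$. One small point to note is that $S$ must be fixed independently of the random frame. If $S$ were chosen from the data, as in the best marginal $F^{(k)}$, $W_p$ would no longer be independent of $x$ given $\psi$ and the law would fail. The main obstacle, such as it is, is this almost-sure full-support and independence bookkeeping; everything else is a direct substitution.
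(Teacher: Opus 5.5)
Your proposal is correct and takes the same route as the paper: specialize the readout-rank law to $\A_S$ using the identity $\Iacc_{\A_S}=F_S$, check that the centered rank is $r=2^k-1$, and read off the mean $r/N$. Your kernel argument for the rank, the almost-sure full-support observation for Haar states, and the requirement that $S$ be fixed independently of the frame spell out checks that the paper leaves implicit.
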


The best order-$k$ marginal is selected from a polynomial number of such subspaces.  For any threshold $t\in(0,1)$, a union bound gives
\begin{equation}
 \Prob\!\left[\frac{F^{(k)}}{\Ffull}\ge t\right]
 \le \sum_{j=1}^k\binom nj
 \Prob\!\left[B_{j,N}\ge t\right],
 \label{eq:union_bound}
\end{equation}
where $B_{j,N}\sim\operatorname{Beta}((2^j-1)/2,(N-2^j+1)/2)$.  For fixed $k$ and fixed $t>0$, the polynomial number of subsets cannot overcome concentration in the exponentially large score space.

Second, one may retain all low-weight diagonal Pauli expectations simultaneously.  For a subset $A\subseteq[n]$, define the Walsh function
\begin{equation}
 \chi_A(z)=(-1)^{\sum_{i\in A}z_i},
 \label{eq:walsh}
\end{equation}
which are the eigenvalue functions of computational-basis Pauli strings $Z_A=\prod_{i\in A}Z_i$.  Let
\begin{equation}
 \A_{\le k}=\operatorname{span}\{\chi_A:|A|\le k\}.
 \label{eq:Ak}
\end{equation}
This is the joint feature space containing \emph{all} diagonal Pauli strings through weight $k$, not the best single marginal.

\begin{corollary}[Aggregate low-order readout]
On the full Boolean cube,
\begin{equation}
 r_k=\sum_{j=1}^k\binom nj,
 \label{eq:rk}
\end{equation}
and the laws \eqref{eq:rank_beta_full}--\eqref{eq:rank_beta_q} hold with $r=r_k$.  Writing $\Iacc_{\le k}\equiv\Iacc_{\A_{\le k}}$, for fixed $k$,
\begin{equation}
 \E\frac{\Iacc_{\le k}}{\Ffull}
 =\frac{\sum_{j=1}^k\binom nj}{2^n-1}
 =\Theta\!\left(\frac{n^k}{2^n}\right).
 \label{eq:aggregate_scaling}
\end{equation}
\end{corollary}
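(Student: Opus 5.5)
The corollary reduces to two claims: the centered rank of $\A_{\le k}$ on the full Boolean cube equals $r_k=\sum_{j=1}^k\binom nj$, and the asymptotic estimate \eqref{eq:aggregate_scaling}. Once the rank is known, the distributional laws \eqref{eq:rank_beta_full}--\eqref{eq:rank_beta_q} and the mean $r_k/N$ follow directly from the readout-rank theorem, applied with $r=r_k$. That theorem only needs $W_p(\A)$ to be a fixed $r$-dimensional subspace conditional on $\psi$, and under the full-support assumption this holds.

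For the rank, I would use the fact that the Walsh functions $\{\chi_A\}_{A\subseteq[n]}$ are orthogonal under the uniform measure on $\{0,1\}^n$. They are therefore linearly independent, so $\dim\A_{\le k}=\sum_{j=0}^k\binom nj$, with $\chi_\emptyset$ the constant function. Next I would show that the centering map $c\mapsto w_c=\sqrt p\,(c-\E_pc)$ has kernel exactly the constants whenever $p_z>0$ for all $z$. Indeed, $w_c=0$ forces $c_z=\E_pc$ at every $z$, because $\sqrt{p_z}\neq 0$. Since $\A_{\le k}$ contains the constants, rank--nullity gives $r=\dim\A_{\le k}-1=r_k$. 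The hypothesis that matters here is full support. It holds almost surely for a Haar state, since $\{p_z=0\}$ has measure zero, which is what the phrase ``on the full Boolean cube'' refers to.

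For the scaling with $k$ fixed, I would bound the numerator on both sides. The lower bound is $\binom nk\ge (n/k)^k$. The upper bound is $\sum_{j=1}^k\binom nj\le k\,n^k$, or $\le (en/k)^k$ if a sharper constant is wanted. Both bounds give $\Theta(n^k)$ with constants depending only on $k$. Dividing by $2^n-1$, which satisfies $2^{n-1}\le 2^n-1\le 2^n$ for $n\ge1$, gives $\Theta(n^k2^{-n})$.

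The only step needing any care is the kernel argument for the centering map, since it is exactly where full support enters. Everything else is routine counting.
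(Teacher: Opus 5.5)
Your proposal is correct and follows the paper's own route: the Walsh functions through weight $k$ are linearly independent on the full Boolean cube, so the centered rank is $\dim\A_{\le k}-1=r_k$, after which the readout-rank theorem and elementary binomial bounds give the Beta laws, the mean, and the $\Theta(n^k2^{-n})$ scaling. You make explicit two details the paper leaves implicit: orthogonality under the uniform measure gives the independence, and under full support the kernel of $c\mapsto w_c$ is exactly the constants.
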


Using every low-weight string is substantially stronger than choosing one marginal because the rank now grows polynomially with $n$.  The ambient score dimension, however, grows as $2^n-1$.  The ratio still closes exponentially for every fixed $k$.  The issue is therefore not the number of local features by itself, but the dimension they occupy inside the full statistical record.

The preceding laws concern the joint orientation of the state and its tangent.  They do not follow from randomness of the state alone.  A circuit produces the required joint frame exactly in the following limiting construction.

\begin{proposition}[Independent Haar suffix]
Suppose a parameterized circuit can be cut after the parameter action as
\begin{equation}
 |\psi_\theta\rangle=U_{\rm post}|a_\theta\rangle,
 \label{eq:suffix}
\end{equation}
where $U_{\rm post}$ is an independent Haar unitary.  If the horizontal tangent of $|a_\theta\rangle$ is nonzero, then the normalized output state and horizontal tangent form a Haar-random orthonormal two-frame.  Consequently, all laws in \cref{sec:haar} hold exactly.
\end{proposition}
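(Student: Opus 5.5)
The plan is to show that, conditional on $\theta$ and on the prefix state $|a_\theta\rangle$, the pair (output state, normalized output horizontal tangent) has the law of the first two columns of a Haar unitary. Then the laws of \cref{sec:haar} follow, because they were derived only from that joint-frame law.

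First I would compute the tangent. Since $U_{\rm post}$ is independent of the parameter, $\partial_v|\psi_\theta\rangle=U_{\rm post}\partial_v|a_\theta\rangle$. The horizontal projector transforms covariantly:
\[
\mathbb I-|\psi\rangle\!\langle\psi|=U_{\rm post}(\mathbb I-|a\rangle\!\langle a|)U_{\rm post}^\dagger .
\]
Hence the horizontal tangent is $|\phi\rangle=U_{\rm post}|\phi_a\rangle$, where $|\phi_a\rangle=(\mathbb I-|a\rangle\!\langle a|)\partial_v|a_\theta\rangle$. By hypothesis $|\phi_a\rangle\neq0$. Unitarity preserves both norms and the orthogonality $\langle a|\phi_a\rangle=0$. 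So $(|a\rangle,|\phi_a\rangle/\|\phi_a\rangle\|)$ is a fixed orthonormal two-frame, determined by the prefix alone, and $\sigma=\|\phi_a\|$ is a deterministic tangent norm.

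Second, I would invoke Haar invariance. Complete the fixed frame to a unitary $V$ with $Ve_1=|a\rangle$ and $Ve_2=|\phi_a\rangle/\sigma$. By right invariance of the Haar measure, $U_{\rm post}V$ is again Haar distributed. Therefore
\[
(|\psi\rangle,|\phi\rangle/\sigma)=(U_{\rm post}Ve_1,\,U_{\rm post}Ve_2)
\]
is the first two columns of a Haar unitary. As noted at the start of \cref{sec:haar}, this is the Haar-random orthonormal two-frame (Haar state, then uniform unit vector in its orthogonal complement), with $|\phi\rangle=\sigma|u\rangle$. If the prefix is itself random but independent of $U_{\rm post}$, I would condition on it and apply the same argument, so the unconditional law is unchanged.

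Third, I would note that the Full fixed-basis law and the Readout-rank law, with their corollaries, depend only on the normalized ratios $\Ffull/\FQ$ and $\Iacc_{\A}/\Ffull$. These ratios are scale invariant in $\sigma$. So they hold exactly, for whatever deterministic or independent $\sigma$ the prefix supplies. The full-support assumption $p_z>0$ used in \cref{eq:amplitude_phase} holds almost surely for a Haar state.

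The main obstacle is being careful that the frame law really is the joint two-column law and not merely a Haar marginal for the state. This is exactly where independence of $U_{\rm post}$ from the parameter and from the prefix is used: it makes $V$ fixed relative to $U_{\rm post}$ before right invariance is applied.
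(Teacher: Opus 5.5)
Your proposal is correct and follows the paper's own route. The paper's proof has the same two steps. First, the horizontal projection is covariant under $U_{\rm post}$, which preserves horizontality and tangent norm. Second, the same independent Haar unitary sends the fixed orthonormal pair to the Haar measure on the Stiefel manifold, which you obtain by right invariance applied to $U_{\rm post}V$. Your remarks on scale invariance in $\sigma$ and almost-sure full support only make explicit what the paper's terse proof leaves implicit.
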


\begin{proof}
Normalize the horizontal pair $(|a\rangle,|b\rangle)$.  Applying the same independent Haar unitary sends every fixed orthonormal pair to the Haar measure on the complex Stiefel manifold.  Unitary evolution preserves horizontality and tangent norm.
\end{proof}

Only the gates after the differentiated parameter rotate its tangent relative to the measurement basis.  An early parameter can therefore acquire a long scrambling suffix, whereas a late parameter cannot.  This predicts that agreement with the Haar law should depend on the available scrambling suffix: early parameters have longer post-parameter evolution, whereas late parameters retain shorter suffixes and may remain anisotropic or even fixed-basis blind.

Local random circuits approach unitary designs with depth~\cite{harrow2009,brandao2016,haferkamp2022}, motivating the Haar frame as a candidate deep-circuit limit.  However, a state design or even finite polynomial moment matching does not automatically prove \cref{eq:rank_beta_full}.  Fisher information contains $1/p_z$ and is sensitive to small probabilities, while the theorem requires the joint distribution of the state and its tangent.  We therefore study the finite-size approach of finite-depth VQCs to the Haar prediction numerically, without inferring asymptotic convergence from a frame potential or a finite-design condition. An explicit finite-depth error bound for
\cref{eq:rank_beta_full,eq:rank_beta_q} would require more than
closeness of the output state to a finite $t$-design: one must control
the joint state--tangent frame together with the lower tail of the
measurement probabilities entering the Fisher ratios. We leave such a
joint-frame approximation theorem open.

Symmetries require one further qualification.  If the measured distribution is supported on only $M<D$ bit strings, all scores and readout functions must first be restricted to that support.  The classical tangent dimension then becomes
\begin{equation}
 N_{\rm supp}=M-1.
 \label{eq:support_dim}
\end{equation}
The relevant readout rank is also recomputed after restricting the functions to the support.  This correction is necessary but not sufficient: the tangent orientation may remain anisotropic within the symmetry sector.

For a half-filled $U(1)$ circuit, $M=\binom n{n/2}$.  On this slice the low-degree Walsh functions are linearly dependent; the harmonic analysis of the Boolean slice~\cite{filmus2016} gives, for the orders tested here, the centered rank
\begin{equation}
 r_k^{U(1)}=\binom nk-1,\qquad k\le n/2.
 \label{eq:u1_rank}
\end{equation}
The numerical projection uses the actual Gram rank, so no analytic rank formula is assumed by the test.  More importantly, correcting the support and rank does not force the tangent to be isotropic inside the symmetry sector.  The number-conserving circuits below test precisely this remaining condition.

\section{Numerical evidence}
\label{sec:results}

All results are exact-statevector calculations with analytic tangent propagation.  Rotation angles are independent and uniform on $[-\pi,\pi]$.  Unless a coordinate direction is specified, $v$ is a normalized Gaussian vector in the full parameter space. Each reported tangent is generated from an independently seeded circuit realization: the circuit parameters, tangent direction, and any randomized architecture are resampled for every job.  We propagate $|\psi\rangle$ and $|\dot\psi_v\rangle$ together and project the latter horizontally at the output.  This avoids finite-difference noise.

Five nonconserving circuit families provide independent realizations of the same test: $R_yR_z$ layers with a nearest-neighbor CZ line; arbitrary single-qubit $SU(2)$ layers with either a CNOT line, a CZ ring, or a fresh random CZ matching; and $SU(2)$ layers with Haar-random two-qubit brickwork gates.  We call these the \emph{generic} families.  A sixth circuit, built from $R_z$ rotations and parameterized XY brickwork gates at half filling, conserves particle number and serves as the $U(1)$ control.

Writing $d$ for the number of circuit layers, the depth--size experiment uses $n\in\{6,8,10,12,14\}$ and
$d/n\in\{0.5,1,2,4,6\}$.  The primary block has 30 random directions per cell.  At $d=6n$ and $n=12,14$, 170 additional directions per cell raise the sample count to 200.  Coordinate directions are sampled from early, middle, and late layer thirds at $d=2n$ and $6n$.  Together these blocks contain 9420 circuit tangents.  A separate exact hierarchy block contains 120 circuits at $n=8,12$, and the Haar control contains 1500 independent frames.

For each tangent we evaluate $F_Q$, $F_{\rm full}$, a fixed marginal, and the best marginal over every subset for $k=1,2,3,n-1$.  Exact enumeration over all $k$ is performed in the hierarchy block.  We define the fitted decay exponent $\gamma_k$ over the simulated range by
\begin{equation}
 \operatorname{median}\!\left(F^{(k)}/F_{\rm full}\right)
 \propto 2^{-\gamma_k n},
 \label{eq:fit}
\end{equation}
using $n=6,8,10,12,14$ and 1000 stratified bootstrap resamples of the independently generated circuit--tangent jobs within each family and depth \cite{efron1979}.

To test whether the observed decay is specifically exponential
rather than merely decreasing, let
\[
m_k(n)
\equiv
\operatorname{median}
\left[
\frac{F^{(k)}}{F_{\mathrm{full}}}
\right].
\]
We compare the three models
\[
\log_2 m_k(n)=a-\gamma n,
\]
\[
\log_2 m_k(n)=a-\beta\log_2 n,
\]
and
\[
\log_2 m_k(n)=a-\gamma n+\beta\log_2 n,
\]
corresponding respectively to exponential, polynomial, and
exponential-times-polynomial decay. Model preference is
determined by the smallest leave-one-size-out root-mean-square
prediction error in $\log_2 m_k(n)$.
An independent aggregate experiment evaluates the joint space $\A_{\le k}$ of all strings through weight $k$ on 1500 new Haar frames and 600 new circuit tangents at depth $d=6n$.  The projection is computed from the full covariance matrix of the Walsh features, including all correlations between them; the equivalent pseudoinverse formula is given in Appendix~\ref{app:support}. Agreement with the exact Haar distributions is quantified using the KS distance, the Cram\'{e}r--von Mises statistic, and the Wasserstein distance of the probability-integral-transformed samples from the uniform distribution. Holm-corrected~\cite{holm1979} p-values are reported as secondary hypothesis-testing diagnostics rather than as measures of distributional proximity.
For finite-depth circuits, we compare both the predicted means and the full distributions, since agreement with the rank scaling does not imply an exact Beta law.

\subsection{Haar benchmark}

Figure~\ref{fig:haar} verifies the two projections separately.  For $F_{\rm full}/F_Q$, all five sizes are consistent with the exact Beta law after Holm correction (smallest adjusted $p=0.893$), and the empirical means range from $0.4995$ to $0.5008$.  For fixed marginals, all 20 size--order tests are consistent with the corresponding exact Beta laws (smallest adjusted KS $p=0.271$).  The independent aggregate experiment is likewise consistent with all 15 exact Beta predictions (smallest adjusted KS $p=0.691$).  At $n=14$, its empirical/theoretical mean ratios are $1.0148$, $1.0006$, and $1.0002$ for $k=1,2,3$.

\begin{figure*}[t]
 \centering
 \includegraphics[width=0.99\textwidth]{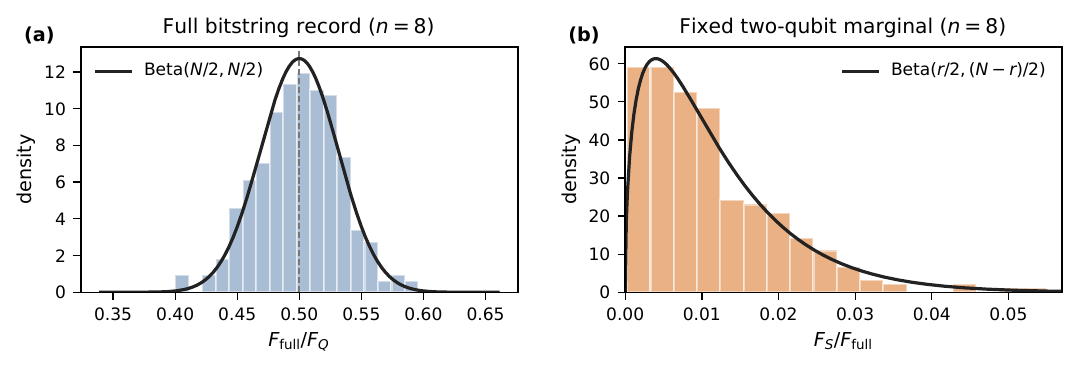}
 \caption{The two exact Haar projections tested at $n=8$.  (a) The fraction of quantum Fisher information visible in the full computational-basis record follows \cref{eq:full_beta}.  (b) The fraction of that record visible in a fixed two-qubit marginal follows \cref{eq:marginal_beta}.  Each histogram contains 300 independent frames; the black curves have no fitted parameters.}
 \label{fig:haar}
\end{figure*}

These tests do more than check a scaling exponent.  They validate the finite-size distribution, the weighted score projection, and the rank calculation before any circuit claim is made.

\subsection{Circuit crossover}

At $n=14$ and $d=6n$, the five generic families have mean $F_Q$ between $0.992$ and $0.996$ and mean $F_{\rm full}$ between $0.486$ and $0.494$.  Their ratios $F_{\rm full}/F_Q$ lie between $0.4898$ and $0.4967$, close to the Haar mean $1/2$ [\cref{fig:crossover}(a)].  The circuits have therefore not become insensitive over the
tested range: both the total tangent norm and the full-record
classical information remain numerically of order unity.

The best marginal tells a different story.  At the same point, the median fractions retained by the best one-, two-, and three-qubit subsets are respectively
\begin{equation}
\begin{aligned}
 k=1:&\quad(2.12\!\text{--}\!2.23)\times10^{-4},\\
 k=2:&\quad(5.79\!\text{--}\!6.16)\times10^{-4},\\
 k=3:&\quad(1.11\!\text{--}\!1.17)\times10^{-3}.
 \label{eq:best_n14}
\end{aligned}
\end{equation}
Across the five generic families, the fitted best-subset exponents are $0.921 - 0.967$ for $k=1$, $0.909 - 0.951$ for $k=2$, and $0.920 - 0.932$ for $k=3$. Every bootstrap interval excludes zero. To distinguish exponential decay from a power law over the simulated range, we compared exponential, polynomial, and exponential-times-polynomial models by leave-one-size-out prediction in $log_2$ space. Among the 15 generic family--order combinations at $d=6n$, a pure exponential model is preferred in 10 cases and an exponential with a polynomial prefactor in the remaining 5; a pure polynomial model is never preferred. We therefore describe the observed behavior as consistent with exponential decay over $n\in\{6,8,10,12,14\}$, rather than as an asymptotic scaling result. Across the full depth grid, the fitted exponent increases toward the Haar value as $d/n$ grows.
This is evidence of a depth-size crossover, not an assumption of asymptotic Haar behavior.

Distributional effect sizes show the same crossover.
To avoid confounding distributional distance with unequal
sample counts, this comparison uses only the primary block
of 30 independently generated samples per
family--size--depth cell. Averaged over the five generic
families and system sizes, the mean KS distances for
$k=1,2,3$ decrease from $0.535$, $0.715$, and $0.761$
at $d/n=0.5$ to $0.163$, $0.167$, and $0.159$ at
$d/n=6$. The corresponding Beta-PIT Wasserstein distances
decrease from $0.287$, $0.372$, and $0.391$ to $0.058$,
$0.066$, and $0.062$. Thus the evidence is not based only
on failure to reject a null distribution: at fixed sample
count, the measured distance to the Haar prediction
decreases systematically with depth.

\begin{figure*}[t]
 \centering
 \includegraphics[width=0.99\textwidth]{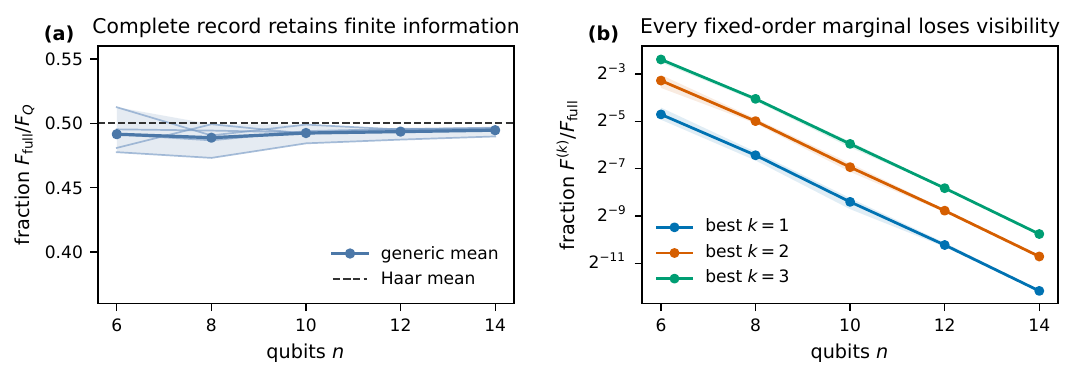}
 \caption{State sensitivity and readout visibility separate in deep generic circuits.  (a) At depth $d=6n$, the complete bitstring record retains approximately one half of the quantum Fisher information as $n$ grows.  (b) Over the same circuits, the best marginal on one, two, or three qubits retains a fraction consistent with exponential decay over the simulated range $n\in\{6,8,10,12,14\}$.  Bands show the range across the five nonconserving circuit families.}
 \label{fig:crossover}
\end{figure*}

Exact hierarchy enumeration reinforces the fixed-order result.  Let $k_{0.9}$ denote the smallest marginal order that recovers at least $90\%$ of $F_{\rm full}$.  Every generic circuit in the $n=8,12$ hierarchy block has $k_{0.9}=n$.  At $n=14,d=6n$, all 1000 generic confirmatory samples also satisfy $F^{(n-1)}<0.9F_{\rm full}$ and therefore have $k_{0.9}=n$.  This statement concerns marginal recovery; it should not be read as proof of irreducible $n$-body correlation.

\subsection{Readout rank and symmetry}

The strongest practical test is the aggregate readout $\A_{\le k}$.  Figure~\ref{fig:aggregate}(a) compares the circuit means with the parameter-free rank prediction.  At $n=14$, the centered ranks are $14$, $105$, and $469$, giving theoretical full-record fractions $0.0008545$, $0.0064091$, and $0.0286272$.  Across the five generic families, the observed ranges are shown in \cref{tab:aggregate}.

For the generic full-support families, the Walsh functions through
weight $k$ are linearly independent on the full Boolean cube. Hence
the centered rank is the combinatorial value $r_k$ in
\cref{eq:rk} and does not itself grow with circuit depth.
The observed depth crossover therefore reflects the orientation of the
probability tangent relative to this fixed feature span. The $U(1)$
family is qualitatively different because restriction to the
fixed-charge support creates exact linear dependencies.

\begin{table*}[t]
\centering
\caption{Information retained by the joint space of all diagonal Pauli strings through weight $k$ at $n=14$ and $d=6n$.  The intervals span the five generic circuit families.}
\label{tab:aggregate}
\begin{tabular}{ccccc}
\toprule
$k$ & rank $r_k$ & predicted / $F_{\rm full}$ & observed / $F_{\rm full}$ & observed / $F_Q$\\
\midrule
1 & 14  & $0.085\%$ & $0.076$--$0.096\%$ & $0.037$--$0.047\%$\\
2 & 105 & $0.641\%$ & $0.584$--$0.649\%$ & $0.286$--$0.322\%$\\
3 & 469 & $2.863\%$ & $2.725$--$2.902\%$ & $1.334$--$1.440\%$\\
\bottomrule
\end{tabular}
\end{table*}

Thus all 14 one-body $Z$ features jointly retain only $0.076$--$0.096\%$ of the full-record information.  Adding every one- and two-body string raises this to $0.584$--$0.649\%$; all 469 strings through weight three reach $2.725$--$2.902\%$.  The mean/rank-law ratio stays within $0.894$--$1.119$ for $k=1$, $0.912$--$1.013$ for $k=2$, and $0.952$--$1.014$ for $k=3$ [\cref{fig:aggregate}(b)].

Across the $75$ generic family--size--order comparisons, the mean KS distance to the parameter-free aggregate Beta prediction is $0.203$, with order-resolved means $0.188$, $0.204$, and $0.215$ for $k=1,2,3$. After Holm correction, none of the comparisons rejects the Beta distribution at level $0.05$, with a smallest adjusted p value of $0.167$. Nevertheless, systematic mean offsets, especially in the $R_yR_z$--CZ family, preclude a claim of an exact finite-depth Beta law. The robust circuit-level conclusion is increasing finite-size agreement with the rank-controlled mean hierarchy; evidence for the full distribution remains partial.

\begin{figure*}[t]
 \centering
 \includegraphics[width=0.99\textwidth]{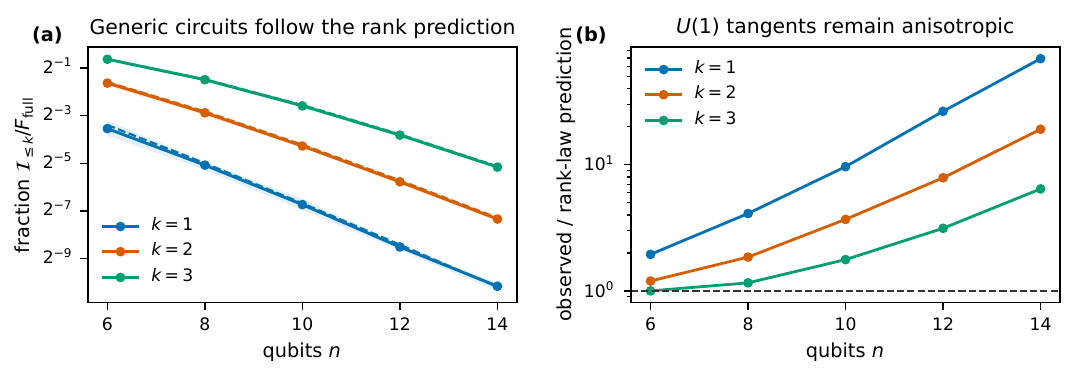}
 \caption{Scope of the readout-rank law.  (a) For the five generic circuit families, the mean information retained jointly by all $Z$ strings through weight $k$ follows the rank prediction (dashed curves).  (b) The number-conserving $U(1)$ circuit retains far more low-weight information than predicted, even after the support size and the actual Gram rank are used.}
 \label{fig:aggregate}
\end{figure*}

To separate the dimension factor from residual orientation
effects, define
\begin{equation}
\rho_k
\equiv
\frac{\Iacc_{\leq k}/F_{\mathrm{full}}}
     {r_k/N_{\mathrm{supp}}}.
\label{eq:normalized-rank-ratio}
\end{equation}
The isotropic rank law predicts
$\mathbb{E}[\rho_k]=1$. At $n=14$, the generic-family mean for $k=1$ is approximately unity, whereas the $U(1)$ value is $69.2$; the corresponding $U(1)$ enhancements are 19.1 and $6.43$ for $k=2,3$.

The generic agreement is not automatic.  At $n=14$ and $d=6n$, the $U(1)$ family still has a nonzero state tangent ($F_Q=0.778$) and a nonzero full-record response ($F_{\rm full}=0.367$), but it does not approach the support-adapted rank law.  Its fixed-order best-subset exponents are only $0.206$, $0.249$, and $0.253$ for $k=1,2,3$.

The aggregate discrepancy is larger.  At $n=14$, the half-filled support contains 3432 bit strings, and the measured centered Gram ranks are $13$, $90$, and $363$.  These values predict retained fractions $0.003789$, $0.026231$, and $0.10580$.  The observed fractions are instead $0.26217$, $0.50098$, and $0.68008$, corresponding to enhancements by factors $69.2$, $19.1$, and $6.43$ [\cref{fig:aggregate}(b)].  Because both the support and rank have already been corrected, the discrepancy shows that rank alone is insufficient and that tangent orientation inside the charge sector remains essential.

There is also a structural recovery effect: in a fixed-Hamming-weight record, any $n-1$ bits determine the last bit, so $F^{(n-1)}=F_{\rm full}$ exactly.  The $U(1)$ result must therefore not be described as the same extensive marginal-recovery phenomenon with a different exponent.

Conserved random circuits possess diffusive slow modes and require parametrically greater depth to approach design behavior than unconstrained circuits~\cite{rakovszky2018,hearth2025}.  Our depth $d=O(n)$ is consequently too short to justify a sector-Haar assumption.  This makes the failure of a sector-Haar description physically plausible, but it does not quantitatively derive the enhancement factors in \cref{fig:aggregate}(b); those factors remain empirical measures of tangent anisotropy in the present work.

\section{Implications and limits}
\label{sec:qml}

Many variational classifiers use Pauli expectation values as
quantum features~\cite{havlicek2019,schuld2020,perezsalinas2020}.
Consider a model that retains every computational-basis Pauli
string $Z_A$ with weight $|A|\leq k$, where $A$ labels the
qubits on which the string acts. A differentiable classical
head has the form
\begin{equation}
 m_A(\bm\theta)
 =
 \langle Z_A\rangle_{\bm\theta},
 \qquad
 |A|\leq k,
 \qquad
 L=f(\{m_A\}).
 \label{eq:qml_head}
\end{equation}

At a fixed operating point, the chain rule combines the
derivatives of the measured moments into the response of a
single effective diagonal statistic. Using the Walsh functions
defined in \cref{eq:walsh}, its eigenvalue function is
\[
 c_{\rm eff}(z)
 =
 \sum_{|A|\leq k}
 \frac{\partial f}{\partial m_A}\,
 \chi_A(z)
 \in
 \A_{\leq k}.
\]
This is the specialization of \cref{eq:nonlinear_head} to the
aggregate low-order readout space in \cref{eq:Ak}. The loss
gradient along the parameter direction $v$ therefore satisfies
\begin{equation}
 |\partial_vL|^2
 \leq
 \Var_p(c_{\rm eff})\,
 \Iacc_{\leq k}(v).
 \label{eq:qml_bound}
\end{equation}

The nonlinear head can change the coefficients of
$c_{\rm eff}$ at each operating point, but it cannot enlarge
the first-order span supplied by the measured quantum
features. Accordingly, $\Iacc_{\A}$ is an optimal local ceiling
over the readout span, not the information necessarily attained
by a particular trained model. A concrete classical head may
access less information, but it cannot exceed this
covariance-normalized ceiling without enlarging the measured
feature span.

For an isotropic tangent, combining
\cref{eq:qml_bound} with the readout-rank law gives
\begin{equation}
 \E\left[
 \frac{|\partial_vL|^2}
 {\Var_p(c_{\rm eff})F_Q}
 \right]
 \leq
 \E\left[
 \frac{\Iacc_{\leq k}}{F_Q}
 \right]
 =
 \frac{r_k}{2(2^n-1)}
 =
 \Theta\!\left(n^k2^{-n}\right).
 \label{eq:qml_scaling}
\end{equation}
Here the ratio is understood on the set where
$\Var_p(c_{\rm eff})>0$; when the variance vanishes,
\cref{eq:qml_bound} implies $\partial_vL=0$, and the bound is
trivial.

The $n=14$ values in \cref{tab:aggregate} illustrate the
scale. Even the maximum variance-normalized response available
within the joint span of all one-body features is only about
$0.04\%$ of the QFI along a generic deep-circuit direction.
Enlarging the span to include every diagonal Pauli string
through weight three raises this ceiling to only
$1.3$--$1.4\%$.

At the level of the isotropic mean and the observed
generic-circuit scaling, the readout-induced local-blindness
mechanism can be summarized as
\begin{equation}
\begin{aligned}
 &\underbrace{F_Q=O(1)}_{\text{state remains sensitive}}
 \quad\land\quad
 \underbrace{\frac{F_{\rm full}}{F_Q}=\Theta(1)}
 _{\text{basis record retains a finite fraction}}
 \\[1mm]
 &\hspace{13mm}\land\quad
 \underbrace{
 \E\left[
 \frac{\Iacc_{\leq k}}{F_Q}
 \right]
 =
 \Theta\!\left(n^k2^{-n}\right)}
 _{\substack{\text{fixed-order readout retains}\\\text{an exponentially small fraction}}}.
 \label{eq:three_conditions}
\end{aligned}
\end{equation}

Equation~\eqref{eq:qml_scaling} directly establishes
exponential suppression of the variance-normalized gradient.
It does not, by itself, establish the same scaling for the raw
gradient, because the classical head may rescale the effective
observable through its derivatives.

To make the additional condition explicit, let $\Sigma_m$ be
the covariance matrix of the retained features,
\[
 (\Sigma_m)_{AB}
 =
 \operatorname{Cov}_p(\chi_A,\chi_B),
 \qquad
 |A|,|B|\leq k.
\]
Writing $\nabla_m f$ for the vector of derivatives
$\partial f/\partial m_A$, one has
\[
 \Var_p(c_{\rm eff})
 =
 \nabla_m f^{\mathsf T}
 \Sigma_m
 \nabla_m f
 \leq
 \|\Sigma_m\|_{\rm op}
 \|\nabla_m f\|_2^2.
\]
Therefore, if the bounds
\[
 \|\nabla_m f\|_2
 \leq
 \operatorname{poly}(n),
 \qquad
 \|\Sigma_m\|_{\rm op}
 \leq
 \operatorname{poly}(n)
\]
hold uniformly over the ensemble, then
\[
 \Var_p(c_{\rm eff})
 \leq
 \operatorname{poly}(n).
\]
Combining this condition with \cref{eq:qml_scaling} yields
\[
 \E\left[
 \frac{|\partial_vL|^2}{F_Q}
 \right]
 \leq
 \operatorname{poly}(n)\,
 \frac{r_k}{2(2^n-1)}
 =
 \operatorname{poly}(n)\,2^{-n}
\]
for every fixed $k$. If $F_Q$ is additionally uniformly of
order unity, the raw squared gradient inherits the same
exponential factor.
We do not assume that every practical classical head satisfies these
polynomial bounds; they are sufficient conditions under which the
variance-normalized readout ceiling transfers to a raw-gradient scaling
statement.

This is a conditional readout-induced mechanism, not a
complete barren-plateau theorem. A barren plateau concerns the
distribution of a specified raw cost gradient over a parameter
ensemble. By contrast, \cref{eq:qml_bound,eq:qml_scaling}
provide a directional, variance-normalized ceiling; the
raw-gradient conclusion additionally requires regularity of
the classical head and feature covariance.

This viewpoint clarifies why the phrase ``measure every qubit
locally'' can be misleading. Measuring a complete bitstring is
global at the level of the classical record. Keeping only the
$n$ numbers $\langle Z_i\rangle$ is a rank-$n$ compression of
a score space of dimension $2^n-1$. A classifier that returns
one Pauli expectation per wire therefore implements a
low-rank readout even though every wire is measured
\cite{schuld2020,perezsalinas2020}.

The projection theorem is not exclusively quantum. Any
isotropic tangent in a finite probability simplex retains a
mean fraction $r/N$ when projected onto an $r$-dimensional
readout. The quantum contribution is the preceding
amplitude--phase split, which contributes an additional factor
$1/2$ relative to the QFI, together with the common unitary
evolution that rotates a state and its tangent as a joint
frame. The result is therefore an information-geometric law
with a quantum-circuit realization, not a uniquely quantum
impossibility statement.

The two information ratios also separate different departures from
isotropy. In a block-isotropic alternative where the direction
$x/\|x\|$ remains rotationally invariant inside the probability sector
but the radial balance between $\|x\|$ and $\|y\|$ is biased, the
conditional rank projection $\Iacc_{\A}/\Ffull$ retains the
$\operatorname{Beta}(r/2,(N-r)/2)$ law, whereas
$\Ffull/\FQ$ need not follow the symmetric
$\operatorname{Beta}(N/2,N/2)$ law. By contrast, anisotropy of the
direction of $x$ within the probability sector changes the readout
projection itself. General departures from isotropy therefore do not
correspond to a universal shift of the Beta parameters: amplitude--phase
imbalance and within-score orientation are distinct effects.

Several limits are essential. The analysis fixes the
computational basis and diagonal functions of its outcomes.
Basis rotations, randomized measurements, ancillas, and
general POVMs can change the accessible subspace
\cite{hotta2004,lu2012,huang2020,lu_sha_2025_random_cfim,
kolotouros2025_random_measurements}. In particular, randomized-basis
CFIM results for pure states show that Haar averaging recovers one half
of the QFIM and that the fluctuations concentrate with increasing
Hilbert-space dimension \cite{lu_sha_2025_random_cfim}. This is
complementary to the present result: here the basis is fixed, and the
second, rank-$r$ readout projection is the source of the fixed-order
exponential bottleneck. A resource comparison between multiple basis
settings and a larger fixed-basis readout must account for state
preparations, shots, basis settings, and retained statistics, and is
left open.

The exact distributions require joint state--tangent isotropy. A
Haar-distributed state or a finite $t$-design condition on the state
alone is insufficient, and we do not derive a convergence rate from
the tested finite-depth circuits to the Haar law. The proof is also
specific to pure-state projective geometry. Mixed-state SLD geometry
does not share the same equal-dimensional amplitude--phase
decomposition, and a purification-based extension would require an
additional treatment of gauge directions. We therefore make no
mixed-state rank-law claim here.

Nor do we claim to reconstruct the full tangent covariance or
to prove joint-frame isotropy from the circuit data. Instead,
we test several consequence-level signatures:
amplitude--phase balance, distributional distance to the Beta
laws, aggregate rank ratios, and parameter-location
dependence. Their joint agreement provides evidence for an
isotropic crossover, not a direct proof of isotropy.

Symmetry sectors additionally require support-restricted
scores and ranks, while nonregular zeros with nonzero
derivatives require a separate limiting analysis. The
numerical evidence is limited to noiseless statevector
simulations through 14 qubits. Finite-shot estimation turns
hybrid optimization into a stochastic procedure
\cite{sweke2020}, while local hardware noise can itself
suppress gradients with circuit depth~\cite{wang2021};
neither effect is included here.

The rank viewpoint also suggests adaptive readout expansion: when the
covariance-normalized response inside the current span is inadequate,
one may add observables that contribute new independent score
directions. The theorem quantifies the benefit of added rank under
isotropy, but it does not provide a polynomial-cost rule for discovering
useful new directions; that measurement-design problem is separate from
the dimension-counting law.

Finally, Fisher information describes an infinitesimal
response. It neither determines classification accuracy nor
specifies an optimization trajectory or finite-displacement
distinguishability. The $U(1)$ counterexample reinforces these
limits by showing that low-order tangent modes can survive
even in a large symmetry sector whose support dimension grows
exponentially.
\section{Conclusion}
\label{sec:conclusion}

Under joint state--tangent isotropy, a quantum model can carry an $O(1)$ state tangent and an $O(1)$ fixed-basis Fisher information while exposing only an exponentially small fraction to any fixed-order diagonal readout space.  The mechanism is not the disappearance of information.  It is the geometry of projecting an isotropic tangent from an exponentially large score space onto a polynomial-rank readout space.

For Haar state--tangent frames this statement is exact at finite size: the fixed basis selects an $N$-dimensional amplitude sector from a $2N$-dimensional quantum tangent, and a centered rank-$r$ readout selects $r$ directions from that sector.  The two fractions obey independent Beta laws.  Over the tested depth--size grid, five nonconserving circuit families show increasing finite-size agreement with this hierarchy, whereas a number-conserving family remains far from the isotropic prediction even after its support and readout rank are corrected.

The QML consequence is precise but limited.  A 14-qubit classifier feeding only the 14 values $\langle Z_i\rangle$ to its classical head has rank 14 inside a score space of dimension $2^{14}-1$ and, under isotropy, accesses on average only $8.55\times10^{-4}$ of $F_{\rm full}$.  Classical depth alone cannot enlarge that first-order
quantum-feature span; one must change the circuit regime,
measurement basis, readout rank, or symmetry-adapted
features.

Readout rank is therefore predictive only after the state--tangent frame has been sufficiently isotropized; outside that regime, accessible information also depends on the orientation of the probability tangent relative to the readout subspace.

\section*{Data Availability}
The complete sample-level CSV files, bootstrap outputs, analysis
notebooks, projection code, model-comparison code, and figure-generation
scripts supporting this study are archived on Zenodo
\cite{ait_haddou_2026_code_data}.

\clearpage
\onecolumngrid
\appendix

\section{Conditional Haar-frame derivation}
\label{app:haar}

We give the conditional construction used in \cref{sec:haar}.  Fix
$\psi_z=\sqrt{p_z}e^{i\vartheta_z}$ with $p_z>0$ and let $u$ be uniform on the complex unit sphere in $\psi^\perp$.  After phase alignment, $\tilde u_z=e^{-i\vartheta_z}u_z$ satisfies
\begin{equation}
 \sum_z\sqrt{p_z}\,\tilde u_z=0.
 \label{eq:complex_constraint}
\end{equation}
Choose a real orthogonal matrix whose first basis vector is $\sqrt p$.  In the remaining $N$ coordinates, the real and imaginary parts of $\tilde u$ form a uniform vector on $S^{2N-1}$.  The distribution is therefore independent of the particular probabilities $p_z$.

Let an orthogonal change of basis within $\sqrt p^\perp$ align the first $r$ coordinates with $W_p(\A)$.  If $g_i$ are independent standard normal variables, define
\begin{equation}
 A=\sum_{i=1}^{r}g_i^2,\quad
 B=\sum_{i=r+1}^{N}g_i^2,\quad
 C=\sum_{i=N+1}^{2N}g_i^2.
 \label{eq:chi_decomp}
\end{equation}
Then $A,B,C$ are independent chi-square variables with $r,N-r,N$ degrees of freedom.  Normalizing by $A+B+C$ produces the sphere vector but cancels from every information ratio.  This proves the three identities
\begin{equation}
\begin{aligned}
 \frac{\Iacc_{\A}}{\Ffull}&=\frac A{A+B}, &
 \frac{\Ffull}{\FQ}&=\frac{A+B}{A+B+C},\\
 \frac{\Iacc_{\A}}{\FQ}&=\frac A{A+B+C}.&&
 \label{eq:chi_ratios}
\end{aligned}
\end{equation}
Equivalently, $A$, $B$, and $C$ are independent Gamma variables
with common scale $2$ and shapes $r/2$, $(N-r)/2$, and $N/2$.
Therefore
\[
 U=\frac{A}{A+B}
 \sim \operatorname{Beta}\!\left(\frac r2,\frac{N-r}{2}\right)
\]
and the beta--gamma factorization implies that $U$ is independent of
$S=A+B$. Since $C$ is independent of $(A,B)$, it is also independent of
$(U,S)$; hence
\[
 V=\frac{S}{S+C}
 =\frac{\Ffull}{\FQ}
\]
is independent of $U=\Iacc_{\A}/\Ffull$. This proves the stated
independence. Finally, $A/(A+B+C)$ is the squared projection of the
same isotropic vector onto an $r$-dimensional subspace of the full
$2N$-dimensional real tangent space, giving
$\operatorname{Beta}(r/2,(2N-r)/2)$ for $\Iacc_{\A}/\FQ$.

\section{Zeros, support restriction, and Gram rank}
\label{app:support}

For a regular classical path, outcomes outside the support satisfy $p_z=q_z=0$ and can be removed.  Replacing $D$ by $M=|\supp p|$ gives a simplex tangent dimension $M-1$.  The weighted feature map \cref{eq:weighted_feature} may acquire additional linear dependencies on the restricted support, which is why $r$ must be computed after restriction.

The Gram formulation makes this explicit.  For a feature vector $c=(c_1,\ldots,c_m)$, let
\begin{equation}
 \Sigma_{ab}=\operatorname{Cov}_p(c_a,c_b),\qquad
 g_a=\partial_v\E_p[c_a].
 \label{eq:covariance}
\end{equation}
Then
\begin{equation}
 \Iacc_{\A}=g^\mathsf{T}\Sigma^+g,
 \label{eq:cov_pinv}
\end{equation}
where $\Sigma^+$ is the Moore--Penrose pseudoinverse, provided $g$ lies in the range of $\Sigma$, as it does for a regular path.  Adding the constant feature converts $\Sigma$ to the uncentered Gram matrix used in \cref{eq:gram_projection}, and its centered rank is one less than the Gram rank.

For the aggregate Walsh readout used numerically, the same formula takes a particularly simple form.  If $A$ and $B$ label Pauli strings, $A\triangle B$ denotes their symmetric difference, and
\begin{equation}
 G_{AB}=\E_p[\chi_{A\triangle B}],\qquad
 b_A=\partial_v\E_p[\chi_A],
 \label{eq:walsh_gram}
\end{equation}
then
\begin{equation}
 \Iacc_{\le k}=b^\mathsf{T}G^+b,
 \qquad r=\rank(G)-1.
 \label{eq:gram_projection}
\end{equation}
Here $G^+$ is the Moore--Penrose pseudoinverse.  The subtraction of one removes the constant feature.

If $p_z=0$ but $q_z\ne0$, the ordinary quadratic Fisher expression is nonregular and may diverge.  Such directions are excluded from the empirical scaling fits rather than assigned an arbitrary floor. The accounting below exhausts the 9420
crossover jobs, so no additional $p_z=0$, $q_z\ne0$ case occurs in
that block. We do not infer a universal prevalence of nonregular
directions from these finite ensembles.

\paragraph{Numerical checks and reproducibility.}
Of the 9420 crossover jobs, 9394 follow a regular probability path, 20 are exactly blind in the computational basis, and 6 have a vanishing physical tangent.  The largest state-norm error is below $3.4\times10^{-14}$, and no projected information exceeds $F_{\rm full}$ by more than $1.6\times10^{-15}$.  Late coordinate directions are noisier than early and middle directions; at $d=6n$, exact fixed-basis blindness reaches $2.5\%$ in the late strata of the $R_yR_z$--CZ and CNOT-line families, so we do not claim parameter-location independence.The complete sample-level CSV files, bootstrap outputs,
analysis notebooks, projection code, model-comparison code,
and figure-generation scripts are archived on
Zenodo~\cite{ait_haddou_2026_code_data}.  The crossover and aggregate experiments use seeds $20260804$ and $20260805$, respectively.

\clearpage
\twocolumngrid
\bibliography{readout_rank_references}

\end{document}